\newtheorem{theorem}{Theorem}[section]
\newtheorem{corollary}{Corollary}[theorem]
\newtheorem{lemma}[theorem]{Lemma}
\newtheorem{example}[theorem]{Example}
\newtheorem{principle}{Principle}
\theoremstyle{definition}
\newtheorem{definition}{Definition}[section]
\title{Principle of Conservation of Computational Complexity}
\author{Gerald Friedland\footnote{University of California, Berkeley and Lawrence Livermore National Laboratory}, Alfredo Metere\footnote{Lawrence Livermore National Laboratory. Both authors contributed equally to this paper.}\\
fractor@eecs.berkeley.edu, metere1@llnl.gov
} 
\date{December 19th, 2017}
\begin{document}

\maketitle

\begin{abstract}
In this manuscript, we derive the principle of conservation of computational complexity. We measure computational complexity as the number of binary computations (decisions) required to solve a problem. Every problem then defines a unique solution space measurable in bits. For an exact result, decisions in the solution space can neither be predicted nor discarded, only transferred between input and algorithm. We demonstrate and explain this principle using the example of the propositional logic satisfiability problem ($SAT$). It inevitably follows that $SAT \not\in P \Rightarrow P\neq NP$. We also provide an alternative explanation for the undecidability of the halting problem based on the principle.
\end{abstract}

\maketitle

\section{Introduction}
The mathematicians Alonzo Church and Alan Turing formulated the thesis that a function on the natural numbers is computable by a human being following an algorithm, ignoring resource limitations, if and only if it is computable by a Turing machine~\cite{church1936,turing1936}. Since then, the theory of computational complexity has built upon this thesis. In this article, we will show that the neglection of resource limitations is the cause for a large amount of confusion among modern-day computer scientists. Intuitively, if one asks ``what is $3 \times 4$?'', the answer is immediately ``12''. In contrast, the response to``what is $\frac{5494380}{457865}$?'' will likely take a human much longer, despite the fact that both computations are considered constant time, commonly denoted as $\mathcal{O}(1)$. So even in an intuitive understanding of computability, there must be a notion that the same truth, encoded in different ways, will take longer to process. In our example, the second description of the number $12$ has not been fully reduced yet. In other words, the description still contains potential computations that have to be performed to reduce the description length to a minimum. 

The main focus of this article is to introduce a resource-aware interpretation of computability. We demonstrate that a computed decision is equivalent to a recorded decision (event), both of which can be measured in bits. This automatically leads to a conservation principle: A computation result that decides on an equiprobable~\cite{bayes1763} input is not predictable or avoidable~\cite{shannon1948}. The minimum number of decisions required to solve a particular problem therefore remains constant. However, decisions can be transferred from the input and therefore reduce computation. We explain why applying this conservation principle is advantageous by demonstrating it on a prominent problem~\cite{cook1971}: propositional logic satisfiability ($SAT$). We demonstrate that the cardinality of, what we call, the {\em solution space} of $SAT$ is $\mathcal{O}(n+2^n)$ bits. Therefore an exact solution to $SAT$ requires an exponential number of computation decisions when the input has only polynomial length of bits. We then set the result aside and analyze the solution space of another problem, which we call {\em Binary Codebreaker (BCB)}. One can see that $BCB$ is in $NP$. However, since the input length in bits is linear, there cannot be a polynomial algorithm for $BCB$, this is $BCB \not \in P$. Applying the principle of complexity conservation, we therefore finally find certainty that the computational complexity classes $P$ and $NP$ are not the same~\cite{pnp,clayinst}. Finally, the undecidability of the halting problem~\cite{turing1936} is a direct result of complexity conservation as well: predicting the decisions made during the execution of a program by only analyzing its syntax is not universally possible. One cannot reduce the size of the solution space to the length of the binary encoding of the program.

Formulas will be written in a notation familiar to most computer scientists~\cite{knuth2008}. We will use the words \textit{algorithm}, \textit{language} and \textit{program} interchangeably. 

\section{On the Length of a Propositional Logic Formula}
\label{sec:formulas}
A common myth among practitioners is that logic formulas are shorter than the corresponding truth tables. This is indeed commonly so but not universally, especially when formulas are encoded in binary. We start busting this myth by defining the notion of a truth table.

\begin{definition}[Truth Table ~\cite{peirce1902}]
\label{def:tt}
A truth table of a corresponding propositional logic formula is a 2-dimensional matrix with entries $\in \{0,1\}$. The matrix has one column for each variable and one final column containing the result of the evaluation of the formula when the variables are configured according to the values contained in the previous columns of the same line. On each row all combinations of the input variables and the evaluated output are listed.
\end{definition}

For convenience, we will refer to a truth table as complete truth table when it contains lines with all possible configurations and results. We will commonly refer to the results as decisions. 

\begin{definition}[Binary Decision]
\label{def:dec}
A binary decision is the result of evaluating one variable configuration of a propositional logic formula. 
\end{definition}

Naturally, it follows that a binary decision is $\in \{0,1\}$. There is no apparent reason, not to define decisions for larger alphabets as well but this is left for future work.

A truth table can be represented (encoded) in many ways -- in fact, infinitely many ways. For example, the table can be compressed with the Lempel Ziv algorithm~\cite{ziv78} or it can be learned using an Artificial Neural Network~\cite{FriedlandKrell2017}. In mathematics, a preferred way to represent a truth table is using the pre-defined functions of the Boolean Algebra~\cite{boole1854}, thus creating the very encoding we call Boolean formula~\cite{huntington1933}. In fact, algebraic reformulation of a propositional logic formula using the equivalence operator ($\Leftrightarrow$) denotes nothing else but a lossless transformation (re-coding) of the underlying truth table. If an equivalent formula is shorter, the result is a reversible reduction of the description length. In computer science and information theory, this is commonly referred to as {\em lossless compression}. 

Let $F$ and $G$ be propositional logic formulas. We denote the number of symbols needed to represent the formulas as $|F|$ and $|G|$. Furthermore, we define $\min(|F|)$ and $\min(|G|)$ as the minimum number of symbols needed to equivalently represent the formulas $F$ and $G$, respectively.
\begin{lemma}[Logic Encoding Lemma]
\label{the:lemma1}
$F \Leftrightarrow G \implies \min(|F|) = \min(|G|)$.
\end{lemma}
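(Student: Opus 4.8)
The plan is to reduce the statement to the observation that $\min(|\cdot|)$ depends only on the logical equivalence class of a formula, together with the fact that $\Leftrightarrow$ is an equivalence relation. First I would make the definition fully explicit: $\min(|F|) = \min\{\,|H| : H \text{ a propositional logic formula with } H \Leftrightarrow F\,\}$, where $H$ ranges over \emph{all} propositional formulas (over the relevant set of variables). This set is nonempty, since $F$ itself belongs to it, and the quantities $|H|$ are non-negative integers, so by the well-ordering of $\mathbb{N}$ the minimum is attained and $\min(|F|)$ is well-defined; likewise for $\min(|G|)$.

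Next I would show that the two sets over which these minima are taken actually coincide whenever $F \Leftrightarrow G$. Indeed, suppose $H \Leftrightarrow F$. Since $\Leftrightarrow$ is transitive and $F \Leftrightarrow G$, we get $H \Leftrightarrow G$; this gives the inclusion $\{H : H \Leftrightarrow F\} \subseteq \{H : H \Leftrightarrow G\}$. By symmetry of $\Leftrightarrow$ we also have $G \Leftrightarrow F$, and the same argument yields the reverse inclusion. Hence the two sets are equal, and therefore their minima are equal, i.e. $\min(|F|) = \min(|G|)$, which is the claim.

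I would also remark, to connect this with the surrounding discussion, that a formula $H$ with $H \Leftrightarrow F$ is precisely a lossless re-encoding of the truth table of $F$; so $\min(|F|)$ is the length of the shortest Boolean-formula encoding of that truth table, and the lemma simply records that logically equivalent formulas have the same truth table and hence the same shortest encoding length. This is the sense in which the lemma legitimizes later talk of "the" minimal description length associated with a solution space.

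As for obstacles, there is essentially no deep difficulty here; the two points that require care are (i) fixing the class of admissible re-encodings once and for all, so that the minimizations for $F$ and for $G$ are literally over the same universe of formulas (one must not, say, tacitly restrict to CNF in one case but not the other), and (ii) confirming existence of the minimum, which is immediate from well-ordering. The lemma is best viewed as a bookkeeping statement that pins down the notion of minimal formula length so that the arguments about the $SAT$ solution space in the following sections are unambiguous.
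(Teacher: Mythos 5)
Your proof is correct and follows essentially the same route as the paper's: both arguments reduce the claim to the observation that $\min(|\cdot|)$ depends only on the logical equivalence class of the formula, the paper doing so via the shared truth table $T_F = T_G$ and you doing so directly via symmetry and transitivity of $\Leftrightarrow$. Your version is slightly more careful in spelling out well-definedness of the minimum and the equality of the two sets being minimized over, but the underlying idea is identical.
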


\begin{proof}
$F$ and $G$ can be represented as complete truth tables $T_F$ and $T_G$ respectively. $F \Leftrightarrow G \implies T_F = T_G$. Now, we can find the minimum length formula representation for $T_F$, which we call $H$. It holds that $H \Leftrightarrow F \Leftrightarrow G$. This means $|H|=\min(|F|) = \min(|G|)$.
\end{proof}

This proof does not specify a rule to generate $H$ because this is trivially not relevant. Therefore, we use {\it propositional logic formula} and {\it Boolean formula} interchangeably in this article.

We will now recite a commonly known proof, invoking the pigeon hole principle, that universal lossless compression cannot exist. 

Let $\Sigma^*$ be the set containing all the strings that can be generated from an alphabet $\Sigma = \{0,1\}$. Strings constructed from $\Sigma$ consist of {\em binary digits}, commonly referred to as {\em bits}~\cite{shannon1948}. Let the sets $A, B$ be $A \subset \Sigma^n, B \subset \Sigma^m$. $A$ contains all words of length $n$ bits and $B$ all words of length $m$ bits, this is $|A| = 2^n$ bits and $|B| = 2^m$ bits. We can now define a transformation scheme with encoding function $E: A \to B$, and decoding function $D: B \to A$. Compression is achieved iff $m < n$. The compression scheme is lossless iff the output of the decoder is equal to the input of the encoder: $D(E(A))=D(B)=A$. To achieve universal lossless compression, we would have to guarantee lossless compression for any $x \in A$. For convenience, let us define the set $U = \{ x \in A ~|~ x~ D(E(x))\neq x \}$. This is, given a compressions scheme, $U$ contains all symbols that cannot be decoded without loss. Universal lossless compression therefore implies that $U$ must be an empty set.

\begin{lemma}[No Universal Reduction of Description Length]
\label{the:lemma0}
$U \neq \emptyset, \forall~ A \subset \Sigma^* ~|~ D(E(A)) = A, |A| > |E(A)|$
\end{lemma}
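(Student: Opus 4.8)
The plan is to run the standard counting (pigeonhole) argument that the excerpt announces: since the encoder maps the set $A$ into the strictly smaller image set $E(A)$, it cannot be injective, and a non‑injective encoder forces the decoder to fail on at least one input, so $U\neq\emptyset$.

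First I would fix the reading of the statement: ``lossless on all of $A$'' means $D(E(x))=x$ for every $x\in A$, which is exactly the assertion $U=\emptyset$; the hypothesis $|A|>|E(A)|$ encodes genuine compression (in the set‑up of the excerpt, $|A|=2^n>2^m=|B|\ge|E(A)|$). I would then argue by contradiction: assume $U=\emptyset$. Then $E$ is injective, since $E(x_1)=E(x_2)$ gives $x_1=D(E(x_1))=D(E(x_2))=x_2$. But an injection from $A$ onto its image is a bijection, so $|A|=|E(A)|$, contradicting $|A|>|E(A)|$. Hence $U\neq\emptyset$.

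Alternatively — and this is the form I would actually write down, because it produces an explicit witness — I would start directly from $|A|>|E(A)|$: the map $E\colon A\to E(A)$ is a surjection of a set onto a strictly smaller one, hence not injective, so there are $x_1\neq x_2$ in $A$ with $E(x_1)=E(x_2)=:b$. Then $D(b)$ is a single element of $A$, so it equals at most one of $x_1,x_2$; the other element $x_i$ satisfies $D(E(x_i))=D(b)\neq x_i$, so $x_i\in U$ and $U\neq\emptyset$.

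There is essentially no mathematical obstacle here; it is the textbook pigeonhole argument. The only care needed is bookkeeping around the quantifiers and the slightly informal phrasing: the claim is asserted for every alphabet $\Sigma$ and every pair $(E,D)$ with $|E(A)|<|A|$, and the conclusion $U\neq\emptyset$ is precisely the statement that no such scheme is lossless on all inputs — the intended meaning of ``no universal reduction of description length''. I would also note in passing that the argument uses only $|E(A)|<|A|$, not the stronger assumption that the codomain is $\Sigma^m$ with $m<n$, so it covers variable‑length codes as well.
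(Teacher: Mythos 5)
Your proposal is correct, and it is the textbook pigeonhole argument the paper announces — but the route you take is genuinely different from, and strictly more general than, the proof the paper actually writes down. The paper does not argue for arbitrary $A$: it fixes the single smallest instance $I=\{(0),(1)\}\subset\Sigma^1$, observes that compression forces $m=0$, so $J=\{()\}$ has exactly one element, and concludes that a one-element code set cannot decode back to a two-element source set. That is the pigeonhole principle specialized to the case ``two pigeons, one hole,'' and as written it establishes only that \emph{some} $A$ (namely $I$) has $U\neq\emptyset$ — which suffices to refute \emph{universal} lossless compression, but does not literally prove the $\forall A$ form in which the lemma is stated. Your argument — $E\colon A\to E(A)$ is a surjection onto a strictly smaller set, hence non-injective, hence $D$ must fail on at least one of two colliding inputs — proves the quantified statement as written, produces an explicit witness in $U$ for \emph{every} qualifying $A$, and, as you note, needs only $|E(A)|<|A|$ rather than a fixed-length codomain, so it also covers variable-length codes. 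In short: the paper buys concreteness and brevity with a minimal example; you buy the full generality that the lemma's statement actually demands. Either is acceptable for the use the paper later makes of the lemma, but yours is the tighter match to the claim.
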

\begin{proof}
Let $I \subset \Sigma^n = \{ (0), (1) \}$ hence, $|I| = 2^n, n = 1$ bit. By definition, we require a set $J \subset \Sigma^m$ such that $D(E(I))=D(J)=I$ with $m < n$. Since $n = 1$ bit, the only possible length for each element of $J$ is $m=0\,\text{bits} \implies |J| = 2^m = 1 \implies J = \{ () \}$. If $|J| = 1$ this would mean that the encoder will output 0 bits. It is self-evident that the only way for this to work is $D(J) = I$, but the only element in $J$ is 0 bits long. Therefore, $I \subset U \implies U \neq \emptyset$.
\end{proof}

For our purposes, we are now interested in the worst case minimum number of bits required to represent a propositional logic formula. We call this the {\em worst case minimum length}. Intuitively, the worst case minimum length is the minimum number of bits that need to be allocated in memory such that a program is able to allow the user to input any propositional logic formula of $n$ variables. We measure the number of bits needed to represent a set of symbols $A$ by taking the logarithm base $2$. This is, $\log_2 |A|$ is the total number of bits needed to represent all symbols in $A$ with a unique binary code. 

Let $|X|$ denote the length of a propositional logic formula $X$ and $wcm(|X|)$ the worst case minimum length of an encoding for $X$ in bits. 
\begin{lemma}
\label{the:lemma2}
$wcm(|X|)=2^n$ bits, with $n$ being the number of variables in the formula.
\end{lemma}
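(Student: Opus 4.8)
The plan is to reduce the claim to a counting statement about truth tables and then apply the two preceding lemmas. First I would note that, by Definition~\ref{def:tt} and Definition~\ref{def:dec}, a propositional logic formula $X$ on $n$ variables is determined up to logical equivalence by the final column of its complete truth table, which is a string of $2^n$ binary decisions over $\Sigma=\{0,1\}$; conversely, every element of $\Sigma^{2^n}$ is realized as such a column by some formula. Hence there are exactly $2^{2^n}$ pairwise inequivalent formulas on $n$ variables. By the Logic Encoding Lemma (Lemma~\ref{the:lemma1}), $\min(|X|)$ depends only on the equivalence class of $X$, so the minimum length is a well-defined quantity attached to each of these $2^{2^n}$ classes, and $wcm(|X|)$ is the maximum of these minima over all $n$-variable formulas (equivalently, $\min_E \max_X |E(X)|$ over losslessly decodable encodings $E$).

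For the upper bound $wcm(|X|)\le 2^n$, I would exhibit the encoding that stores the output column of the complete truth table verbatim: it is trivially lossless and uses exactly $2^n$ bits for every $n$-variable formula, so no formula requires more than $2^n$ bits. The same number appears from the indexing viewpoint introduced just before the lemma: representing all $2^{2^n}$ inequivalent formulas with unique binary codes costs $\log_2 2^{2^n} = 2^n$ bits, and this coincides with the per-formula allocation precisely because $2^{2^n}$ is an exact power of two.

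For the lower bound $wcm(|X|)\ge 2^n$ I would argue by the pigeonhole principle, which is the content of Lemma~\ref{the:lemma0}. There are only $\sum_{k=0}^{2^n-1} 2^k = 2^{2^n}-1$ binary strings of length strictly less than $2^n$, which is fewer than the $2^{2^n}$ pairwise inequivalent formulas that an encoding must be able to represent. Hence no losslessly decodable scheme can assign every $n$-variable formula a codeword shorter than $2^n$ bits; an attempt to do so would be a universal compression of the set of all truth-table columns below its own size, and Lemma~\ref{the:lemma0} shows the set $U$ of non-recoverable inputs would then be nonempty, contradicting losslessness. Consequently at least one $n$-variable formula has minimum encoding length exactly $2^n$ bits, so $wcm(|X|) \ge 2^n$. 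Combined with the upper bound, $wcm(|X|) = 2^n$ bits.

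I expect the main obstacle to be conceptual rather than computational: making the definition of $wcm$ precise enough that the pigeonhole step is watertight. One must insist that "worst case minimum length" ranges over a single encoding scheme applied to all $n$-variable formulas (otherwise each individual formula could be mapped to the empty string), that lengths are counted as bits of a binary encoding and not as counts of logical connectives, and that variable-length codes are included in the comparison — which is exactly why Lemma~\ref{the:lemma0} must be invoked in its general lossless form rather than for fixed-length codes only. A minor but worthwhile point is to record explicitly that the bound is an equality, not merely $\Theta(2^n)$, which hinges on $2^{2^n}$ being a power of two so that no rounding is lost.
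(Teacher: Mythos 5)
Your proposal is correct and follows essentially the same route as the paper: count the $2^{2^n}$ pairwise inequivalent truth tables and invoke Lemma~\ref{the:lemma0} (the pigeonhole/no-universal-compression argument) to conclude that $\log_2(2^{2^n})=2^n$ bits are necessary, with the verbatim truth-table column giving the matching upper bound. You spell out the two bounds and the well-definedness of $wcm$ more explicitly than the paper's three-line proof, but the underlying argument is identical.
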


\begin{proof}
There are $2^{2^n}$ unique truth tables. By Lemma~\ref{the:lemma0}, we need a minimum of $\log_2(2^{2^n})=2^n$ bits to encode them. So $wcm(|X|)=2^n$ bits.
\end{proof}

\begin{example}
Let us demonstrate our thought process as follows using the Boolean Algebra on a formula of $2$ variables: there are $3$ basic operations (NOT: $\neg$, AND: $\land$, OR: $\lor$) which require at least $\log_2(3) \approx 1.585$ bits to encode, practically speaking $2$ bits. The unary negation operation describes $2^1=2$ state changes and the binary operations describe $2^2=4$ state changes each, with $2$ states being redundant ($0 \land 0 \Leftrightarrow 0 \lor 0$ and $1 \land 1 \Leftrightarrow 1 \lor 1$). This sums up to $8$ state changes. The total amount of possible state changes on $2$ variables is $2^{2^2}=16$. This means for the remaining $16-8=8$ state changes, we will need at least $2$ basic operations in the formula. Encoding $2$ operations of each $2$ bits makes $2+2=4$ bits for the operations and $1$ bit each for the variables: a total of $6$ bits. With $2$ variables we are therefore already above the worst case minimum length of $2^2=4$ bits, even if we choose to repeat this calculation without rounding to integer. 
\end{example}

As explained before, if, absurdly, one was able to compress a binary sequence of $2^n$ bits to less than $2^n$ bits universally, one could iteratively apply compression and shrink any binary sequence losslessly to $0$ bits. If there was any set of logic operators that could make a formula universally shorter than a truth table, we could encode any binary file on a computer using these operators and decompress losslessly by simply evaluating that formula. In the general case, where no external assumptions can be applied, it does not matter whether we use a classical compression algorithm for the table, a machine learning method, or the predefined functions of the Boolean Algebra: only $2^n$ bits can universally represent $2^{2^n}$ unique objects.

The human representation of a propositional logic formula is commonly smaller relative to a truth table because it uses a larger alphabet than the binary alphabet of the truth table. A typical alphabet for a Boolean formula of $n \in \mathbb{N}$ variables is $\Sigma=\{\land,\lor,\neg, x_1, ... ,x_n\}$. With a larger alphabet, $2^n$ possible states can be represented in less than $n$ symbols. In the same way, decimal number representation is shorter than binary number representation, this is $\log_{10}(2^n) \leq \log_2(2^n)$. For propositional formulas, the shortest representation can be achieved if we choose to encode the formula with an alphabet of size $2^n$. Then $\Sigma=\{F_i\}$ with $i=0,...,2^n-1$. For example, $F_{128}$ then denotes $x_1 \land x_2 \land x_3$. In this notation, we can immediately see if a formula is satisfiable, this is, if there is a configuration of the variables that evaluates to $1$: any formula that is not $F_0$ is satisfiable. However, we will demonstrate in this article that it takes exponentially many computation steps to get all formulas into this representation. 

\section{Conservation of Computational Complexity}
\label{sec:conscc}
Let us continue with the thought from the end of the previous Section. The shortest representation for propositional logic formulas can be achieved if we choose to encode the formula with $n$ variables using an alphabet of size $2^n$. Then $\Sigma=\{F_i\}$ with $i=1,...,2^n-1$. Using a different example, $F_{14}$ then denotes $x_1 \lor x_2$. We explained that, in this notation, we can immediately see if a formula is satisfiable because any formula that is not $F_0$ is satisfiable. The word {\em immediate} stands to be corrected here: it takes $O(log_b(2^n))$ comparisons to see if the formula is satisfiable or not, with $b$ being the number base we are implementing the comparisons in. For example, in base $2$, we require $n$ comparisons to be sure a formula is not satisfiable (see also Lemma~\ref{the:lemma3}). On the other end of the spectrum, it is easy to see that if the number base is chosen to be $2^n$, the search would take $O(1)$. This is a direct consequence of Lemma~\ref{the:lemma1}: we can define a formula as whatever we want -- as a truth table, as Neural Network, or even as an algorithm that can be queried by a user (see Lemma~\ref{the:theorem2}). Ultimately, however, any equivalent transformation that is universally able to encode all formulas, needs to describe the same $2^n$ bits. Intuitively: regardless of any computational assumptions, unless we are settling with an approximative result, the $2^n$ bits that a formula represents must be accessible to the computer in some way, either as input or as a decision made in computation. We will now formalize this discussion.

The standard model for studying algorithmic complexity is the Turing Machine~\cite{turing1936}. Varying definitions can be found in the literature. For the discussion in this article, it suffices to distinguish between two main classes: deterministic and non-deterministic. Non-deterministic Turing Machines can perform more than one computation step at the same time. Deterministic Turing Machines can only perform one computation step at a time. In particular, we are interested in the languages they can implement: $NP$ is the class of algorithms that can be solved on a non-deterministic Turing Machine (NTM) in polynomial time and $P$ is the class of algorithms that can be solved on a deterministic Turing Machine (DTM) in polynomial time. For almost 50 years, it has been an open question if the two complexity classes are the same. That is, if there is a way to universally transform non-deterministic polynomial algorithms into deterministic-polynomial algorithms. As a consequence, all $NP$ algorithms would be in $P$.

Without loss of generality~\cite{cook2000}, we will assume the alphabet of any Turing Machine discussed in this article to be $\Sigma=\{0,1\}$. In such a Turing Machine, all state transitions can be described by propositional logic functions. Algorithms in $NP$ are therefore allowed to have a number of computation steps bounded by $\mathcal{O}(2^{p(n)})$ while algorithms in $P$ can only have $\mathcal{O}(p(n))$ steps, where $p(n)$ is a polynomial function, this is $p(n) \leq n^k$ for a constant $k$ and $n$ the number of symbols in the input to the algorithm. We note that, for an input of zero length it is trivially impossible to establish a model of complexity based on this definition.

We are now ready to address these problems by introducing the concept of a solution space.

\begin{definition}[Solution Space]
\label{def:ss}
The solution space of a corresponding problem is the smallest multiset~\cite{knuth1998} of symbols $\in \Sigma$ that a Turing Machine must consider for an exact solution of the problem.
\end{definition}

Considering a symbol here means that it can either be read from the input or computed.

\begin{definition}[Independent Decision]
\label{def:inddec}
Let $b$ be a binary decision (see Definition~\ref{def:dec}). We call $b$ {\em independent} iff $b$ does not depend on the outcome of any other decision. 
\end{definition}

It follows that a decision can only be independent iff none of the variables in the configuration of the propositional logic formula underlying the decision are dependent on another decision. For example, $b$ would be independent if each variable depended on individual coin flips. The concept that a random coin flip defines the information content of $1$ bit has been formalized by Shannon~\cite{shannon1948}. He defined the Entropy of a discrete random variable $X$ with possible values $\{x_0, ..., x_n\}$ and probability mass function $P(X)$ as: $\mathrm {H} (X)=\mathrm {E} [-\log_2(\mathrm {P} (X))]$. The result is measured in bits. Shannon's definition is more general than we will need in this article as we will leave dependent decisions to future work. However, we demonstrate the following consistency. 

Let $b$ be a binary decision. Consistent with~\cite{shannon1948}, we call $H(b)$ the information content of $b$, measured in the unit bit. 

\begin{lemma}
\label{the:theorem2}[Equivalence of Computation and Encoding]
$b$ is independent $\Leftrightarrow H(b)$=1 bit 
\end{lemma}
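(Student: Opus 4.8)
The plan is to split the biconditional and reduce both implications to one elementary fact about the binary entropy function. First I would unfold the definitions: by Definition~\ref{def:dec} we have $b \in \{0,1\}$, so $b$ is a Bernoulli variable and, writing $p = \mathrm{P}(b=1)$,
\[
H(b) = -p\log_2 p - (1-p)\log_2(1-p),
\]
with the usual convention $0\log_2 0 = 0$. A short calculus argument (set the derivative to zero, or invoke strict concavity of $x \mapsto -x\log_2 x$) shows $H(b) \in [0,1]$ and that $H(b) = 1$ bit \emph{iff} $p = \tfrac12$, i.e.\ iff $b$ is equiprobable. Hence the whole lemma is equivalent to the statement ``$b$ is independent $\Leftrightarrow$ $b$ is equiprobable'', and the real content is to link Definition~\ref{def:inddec} with equiprobability.

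For the direction ``$b$ independent $\Rightarrow H(b) = 1$ bit'' I would use Definition~\ref{def:inddec} together with the remark following it: independence means that no variable of the configuration underlying $b$ is fixed by another decision, so in the worst case relevant to a solution space (Definition~\ref{def:ss}) these variables behave like independent fair coin flips, nothing biases the result, both outcomes of $b$ stay attainable with equal weight, $p = \tfrac12$, and $H(b) = 1$ bit. For the converse I would argue by contraposition using Lemma~\ref{the:lemma0}: if $b$ is not independent it depends on the outcome of some other decision $c$, so once $c$ is recorded the value of $b$ is constrained; then the record ``value of $c$, then value of $b$'' is losslessly recoverable from the strictly shorter record consisting of the value of $c$ alone plus the constraint, which by Lemma~\ref{the:lemma0} can only happen if $b$ does not carry a full bit, i.e.\ $p \neq \tfrac12$ and $H(b) < 1$ bit. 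Contrapositively, $H(b) = 1$ bit forces $b$ to be independent.

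The step I expect to be the main obstacle — and the one I would write out most carefully — is exactly the bridge between the logical notion of an \emph{independent decision} (Definition~\ref{def:inddec}, a statement about the free variables of a Boolean formula) and the probabilistic statement $p = \tfrac12$, especially in the $\Leftarrow$ direction. Definition~\ref{def:inddec} by itself does not pin down a distribution on $b$, and statistical dependence alone need not lower the marginal entropy (a decision that merely copies an equiprobable decision still has $H = 1$ bit). To push the converse through one must read ``$b$ depends on the outcome of another decision'' in the solution-space sense of Definition~\ref{def:ss} — that $b$'s value is \emph{forced} by decisions already fixed, so $b$ contributes no fresh bit — rather than as mere correlation. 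I would state this reading explicitly as the one extra modeling ingredient, after which both implications drop out of the entropy identity above together with Lemma~\ref{the:lemma0}.
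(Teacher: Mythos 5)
Your forward direction is essentially the paper's: independence of the underlying variables gives equiprobable configurations, hence $p=\tfrac12$ and $H(b)=1$ bit (the paper routes this through $H(\mathrm{P}(C))=n$ bits for the full configuration space and then attributes one bit to $b$, but the substance is the same reduction of ``independent'' to ``equiprobable''). The genuine problem is your converse. Your contrapositive step concludes, from the compressibility of the joint record ``value of $c$, then value of $b$,'' that $p\neq\tfrac12$ and $H(b)<1$ bit; this inference is invalid. Compressibility of the pair only bounds the joint description length, not the marginal distribution of $b$: if $b$ deterministically copies an equiprobable decision $c$, the pair compresses to one bit, yet $\mathrm{P}(b=1)=\tfrac12$ and the marginal $H(b)$ is still $1$ bit even though $b$ is dependent in the sense of Definition~\ref{def:inddec}. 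You name exactly this counterexample yourself, but the repair you offer --- reading ``depends on'' so that $b$ is \emph{forced} by already-fixed decisions, with $H(b)$ tacitly replaced by the conditional (fresh) information $b$ contributes --- proves a different statement from the lemma as written, in which $H(b)$ is the (marginal) information content of the single decision $b$. So the $\Leftarrow$ direction is not closed.

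Note also that the paper takes an entirely different route for that direction: it unpacks $H(b)=1$ into $\mathrm{P}(F(b)=0)=\mathrm{P}(F(b)=1)=2^{-1}$ and then verifies this probability by counting over the enumeration $F_x$ of all truth tables (half of the possible first-line result entries are $0$ and half are $1$), i.e., it places a uniform distribution on the \emph{formula} itself rather than reasoning about dependence between decisions, and then declares the outcome ``consistent with'' Definition~\ref{def:inddec}. Whatever its own merits, that argument sidesteps the copying counterexample by never engaging with dependence at all, whereas your version engages with it and therefore runs into the fact that dependence need not reduce marginal entropy. To close your proof you would have to either adopt the paper's averaging-over-formulas model explicitly, or restate the lemma in terms of conditional entropy given the prior decisions.
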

\begin{proof}
Let $\Sigma=\{0,1\}$. Let $V=\{v_1, ..., v_n\}$ be the variables for a propositional logic formula $F$. Let $c \in \Sigma^{n}$ be a configuration of $V$ and $C$ be the set of all possible $2^n$ configurations. Let $F(c)=b$. 

By definition, if $b$ is independent then all of the variables in the configuration of the propositional logic formula underlying the decision are independent. Using probabilities~\cite{bayes1763}, this is $P(c_i)=P(c_j),~\forall~ 1 \leq (i,j) \leq n$, and $\sum_{m=0}^{n} P(c_m)=1$. In other words, all configurations are equiprobable. With $|C|=2^n$ it follows $H(\mathrm {P} (C))=-\log_2(2^{-n})=\log_2(2^n)=n$ bits. Now with $b$ representing $\frac{1}{n}$th of the result columns of a truth table, it follows that the information content is $\frac{n}{n}=1$ bit which we denote consistently as $H(b)=1$.

In the other direction, $H(b)$=1 bit needs to imply that $\mathrm {P}(F(b)=0)=\mathrm {P}(F(b)=1)=2^{-1}$ and thus $H({0.5,0.5})=1$. This can be easily verified using the $F_x$ notation, defined in the beginning of this section, where $x$ is the decimal representation of the result column of the truth table for $F$. Without losing generality, we will now only focus on the first line of the result column of the truth table for $F$. Since $\frac{x+1}{2}$ numbers are odd and $\frac{x+1}{2}$ numbers are even or $0$, one can verify that $\mathrm {P}(F(b)=0)=\mathrm {P}(F(b)=1)=2^{-1}$ over all truth tables represented by $F_x$. This is consistent with Definition~\ref{def:inddec}.   
\end{proof}

As explained before, assuming the alphabet of a Turing Machine $\Sigma=\{0,1\}$ implies that all state transitions can be described by propositional logic decisions. In other words, $1$ bit measures an independent decision, no matter if it is made before computation and passed as input or during computation. Independent decisions cannot be predicted or avoided without loss of accuracy. Therefore, we can refer to them as {\em irreducible}. Consequently, we can now use the bit to measure worst-case computation steps.

\begin{principle}[Conservation of Computational Complexity]
\label{pri:conservation}
Decisions in the solution space defined by a problem can neither be predicted nor discarded, only transferred between input and algorithm.
\end{principle}

\section{Satisfiability}
\label{sec:satisfiability}
We are now ready to take a closer look at satisfiability. The satisfiability problem of propositional logic ($SAT$), is the following: given a propositional logic formula $F$, is $F$ satisfiable? $F$ is satisfiable iff there is a non-empty set of configurations $C$ of the binary variables with alphabet $ \Sigma = \{0,1\}$ such that $F$ evaluates to $1$. More formally, $\exists ~c \in C = \{ b \in \Sigma ~|~ F(b) = 1\}$. This is, the language is defined as $SAT := \{code(F) \in \Sigma^* |F~\text{is a satisfiable formula of propositional logic}\}$, where $code(F)$ is the binary representation of $F$.

\begin{lemma}
\label{the:lemma3}
The size of the solution space (see Definition~\ref{def:ss}) of $SAT$ is $\mathcal{O}(n+2^n)$ bits, with $n$ being the number of variables in the formula. 
\end{lemma}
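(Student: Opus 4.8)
The plan is to bound the solution space of $SAT$ from both sides, obtaining a tight estimate $\Theta(n+2^n)$ (which in particular gives the stated $\mathcal{O}(n+2^n)$), and then to invoke Principle~\ref{pri:conservation} to argue that this count does not depend on how the formula is encoded in the input. The real content is the \emph{lower} bound, since that is what later forces an exponential number of computation decisions when the input is short; the upper bound is almost immediate.

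First I would fix a formula $F$ on $n$ variables and recall, as in the proof of Lemma~\ref{the:lemma1}, that $F$ is semantically equivalent to its complete truth table $T_F$, whose result column has exactly $2^n$ entries, one per configuration $c\in\Sigma^n$, and each such entry is a binary decision in the sense of Definition~\ref{def:dec}. For the lower bound I would then argue that in the worst case these $2^n$ decisions are independent, hence irreducible: taking the family $\{F_x\}$ with $x$ ranging over all $2^n$-bit strings, Lemma~\ref{the:theorem2} says each bit of the result column carries $H=1$ bit, while Lemma~\ref{the:lemma0} together with Lemma~\ref{the:lemma2} says no encoding can universally represent the $2^{2^n}$ possible result columns in fewer than $2^n$ bits. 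Consequently any Turing Machine that decides $SAT$ exactly must, on the worst-case instance, consider all $2^n$ of these bits — whether by reading them from $code(F)$ or by computing them — which is exactly what Principle~\ref{pri:conservation} asserts. To this I would add the $n$ summand: exhibiting or checking a witness for a satisfiable $F$ requires a configuration $c\in\Sigma^n$, i.e.\ $n$ additional symbols that must be considered (and dually, certifying that $F_0$ is unsatisfiable forces all $2^n$ rows to be inspected). This yields a lower bound of $n+2^n$ symbols.

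For the matching upper bound I would exhibit a concrete exact procedure whose considered multiset has size $\mathcal{O}(n+2^n)$: represent $F$ by $T_F$, which by Lemma~\ref{the:lemma2} occupies at most $2^n$ input bits, then scan the result column with an $n$-bit counter ranging over configurations, halting at the first $1$. The symbols considered are the $\le 2^n$ input bits plus the $n$ counter bits (the single output bit is negligible), for a total of $\mathcal{O}(n+2^n)$. Combining the two directions gives that the solution space of $SAT$ has size $\Theta(n+2^n)$, and in particular $\mathcal{O}(n+2^n)$ bits.

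The main obstacle I anticipate is the lower bound: ruling out that some clever algorithm decides satisfiability while ``considering'' asymptotically fewer than $2^n$ symbols, e.g.\ by exploiting structure shared by all formulas. This is precisely where Lemma~\ref{the:lemma0} and Lemma~\ref{the:lemma2} are doing the work — since the $2^{2^n}$ distinct truth tables cannot all be compressed below $2^n$ bits, no single algorithm can shortcut the worst-case instance, and Principle~\ref{pri:conservation} then forbids the missing decisions from being silently discarded rather than transferred. I would also be careful to keep the $n$ term explicit rather than absorbing it into $\mathcal{O}(2^n)$, since it isolates the certificate-length contribution that matters for the subsequent comparison with $BCB$.
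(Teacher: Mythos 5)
Your proposal is correct relative to the paper's framework and follows essentially the same route: the $2^n$ term comes from Lemma~\ref{the:lemma2} (via Lemma~\ref{the:lemma0}, the $2^{2^n}$ truth tables cannot be universally represented in fewer than $2^n$ bits), the $n$ term from the additional independent decisions needed to locate or rule out a satisfying configuration, with Lemma~\ref{the:theorem2} and Principle~\ref{pri:conservation} used to equate read and computed bits. The only difference is that you additionally supply an explicit $\mathcal{O}(n+2^n)$ upper-bound procedure (truth-table scan with an $n$-bit counter), which the paper omits but which is harmless and arguably sharpens the statement to $\Theta(n+2^n)$.
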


\begin{proof}
It follows from Lemma~\ref{the:lemma2} that a propositional logic formula $F$ has to be represented in at least $2^n$ bits. Since all $2^n$ decisions in the solution column are independent, the minimum number of bits it can be universally represented in is $\log_2(2^n)=n$ bits. By Lemma~\ref{the:theorem2}, we therefore need at least $n$ independent binary decisions to determine the unsatisfiability of a formula that is already fully represented in $2^n$ bits (e.g. truth table). The solution space is therefore $\mathcal{O}(n+2^n)$ bits.
\end{proof}

Before we analyze SAT further, it is helpful to remember that most formulas can be encoded in a polynomial number of bits. For the proof that follows, we are interested in patterns of unsatisfiable formulas $F_0$ that can be encoded in polynomial-length bits. Finding an example of such a polynomial pattern is straightforward. We can use a prefix bit $0$ to encode ``operator follows" and a prefix bit ``1" to encode ``number $x$ with $n$ digits follows" where $n$ is the number of independent variables in the formula and $x$ is the number of the variable. Without loss of generalization we assume that $n$ is known based on a separate transmission of that information (which takes $\log_2(n)$ bits). With three boolean operators, we therefore need $1+\log_2(3) \approx 3$ bits per operator and approximately $1+\log_2(n)$ bits per variable. Now we encode the following pattern of unsatisfiable formulas: $x_1 \land ... \land \neg x_{n-k} \land ... \land x_n $ with $0 \leq k < n$. It is easy to see that this formula pattern is bounded in length by $\mathcal{O}(n \log_2 n)$ bits and is therefore polynomial. This particular pattern alone can encode $n$ different unsatisfiable formulas. 

We are now ready to demonstrate the worst case computational complexity of $SAT$.
\begin{theorem}
\label{the:theorem11}
$SAT \not\in P$
\end{theorem}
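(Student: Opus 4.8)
The plan is to argue by contradiction, exploiting the gap between the polynomial-length encodability of certain formulas and the exponential size of the solution space established in Lemma~\ref{the:lemma3}. First I would assume $SAT \in P$, so that there is a deterministic Turing Machine $M$ with alphabet $\Sigma = \{0,1\}$ that decides $SAT$ and halts within $p(m)$ steps on every input of length $m$, for some polynomial $p$. The goal is then to exhibit a family of instances on which $M$ provably cannot meet this bound.

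Next I would fix the input family: take the unsatisfiable formulas of the pattern $x_1 \land \dots \land \neg x_{n-k} \land \dots \land x_n$ together with the prefix-code encoding described just before the theorem, so that the input length is $m = \mathcal{O}(n \log_2 n)$ bits, which is polynomial in $n$. By Lemma~\ref{the:lemma3} the solution space of $SAT$ has size $\mathcal{O}(n + 2^n)$ bits, the $2^n$ term coming from Lemma~\ref{the:lemma2}: to certify that a formula is $F_0$ rather than any other $F_i$, a machine must account for all $2^n$ entries of the result column, and by Lemma~\ref{the:lemma0} these cannot be represented in fewer than $2^n$ bits in a way that works for every formula. By Lemma~\ref{the:theorem2} each of those $2^n$ entries is an independent decision carrying one bit, hence irreducible.

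Then I would invoke Principle~\ref{pri:conservation}: the $\Theta(2^n)$ irreducible decisions constituting the solution space can be neither predicted nor discarded, only transferred between input and algorithm. The chosen input supplies only $m = \mathcal{O}(n \log_2 n)$ bits, so it can account for at most that many of these decisions; the remaining $\Omega(2^n)$ decisions must be made by $M$ itself. Since $M$ has binary alphabet, every computation step is a binary decision, so $M$ performs $\Omega(2^n)$ steps on this input. Because $n = \Omega(m/\log m)$, this is $2^{\Omega(m/\log m)}$, which exceeds $p(m)$ for all sufficiently large $m$, contradicting the assumption that $M$ runs in polynomial time. Hence $SAT \not\in P$.

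I expect the delicate step to be the application of the conservation principle uniformly across the whole family of short inputs: one must argue that the exponential solution space is a property of the problem $SAT$, not of any single instance, so that no correct decision procedure — however cleverly it specializes to the structured, polynomial-length instances above — can avoid reconstructing the missing $\Omega(2^n)$ bits internally. Making this rigorous requires care that the ``transfer'' in Principle~\ref{pri:conservation} is accounted per instance (short input $\Rightarrow$ long computation) rather than amortized over the family, and that the worst-case/universal flavor of Lemma~\ref{the:lemma2} is legitimately inherited by the restricted family; this is the crux on which the argument stands or falls.
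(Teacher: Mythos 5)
Your proposal follows essentially the same route as the paper's own proof: a contradiction argument that combines the $\mathcal{O}(n+2^n)$-bit solution space of Lemma~\ref{the:lemma3} with the polynomial-length unsatisfiable formula family constructed just before the theorem, and then appeals to Principle~\ref{pri:conservation} (equivalently, the non-existence of universal lossless compression from Lemma~\ref{the:lemma0}) to force $\Omega(2^n)$ computation decisions on a short input. You are in fact more explicit than the paper about the instance family and about the delicate per-instance application of the conservation principle, which is precisely the step the paper's proof also rests on.
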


\begin{proof}
For a proof by contradiction, let us assume $Magic(F)$, an algorithm that solves $SAT$ in polynomial time on a deterministic Turing Machine. From Lemma~\ref{the:lemma3} we know that the solution space of SAT is $n + 2^n$ bits.
A polynomial encoding of $F$ and a polynomial number of computation decisions would result in an overall polynomial count of decisions. Since $SAT$ does not allow any a-priori assumptions about the structure of the input, a universal lossless reduction (see Lemma~\ref{the:lemma0}) of the solution space would be required to implement $Magic(F)$ such that it can cope with such input universally. Per Lemma~\ref{the:lemma0}, a universal lossless compression scheme does not exist. The number of decisions in $Magic$ can therefore not be bounded by $\mathcal{O}(n^k)$. It follows $SAT \not\in P$. 
\end{proof}

Consider $Magic(F)$ to be a syntax analyzer. This is, an algorithm that uses deduction rules. In general, there is an infinite number of formulas describing one truth table. This is easily seen from the fact that every formula can, for example, be ``mirrored" infinitely with $\land$. Even though the worst case minimum length of a formula is $2^n$ bits, the maximum length is infinite. Hence, in general, we would have to match against an infinite set of finite-length  patterns that could be used to represent the same truth table: this would take infinite decisions. In other words, a syntax analysis is undecidable. We will present an additional demonstration for that in Section~\ref{sec:halting}. If we consider $Magic(F)$ a semantic analyzer, then only a reproduction of the complete truth table can lead to an exact result. This cannot be universally done in polynomial time as the truth table has an exponential number of entries. It immediately follows that the current solution of $SAT$ is actually the best case. Given a formula $F$, one needs to guess all $2^n$ variable configurations and then evaluate in linear time. 

Our result is consistent with the No-Free-Lunch theorem~\cite{nofreelunch}. It is well known that optimization needs context and cannot be universal. This is an equivalent formulation of the fact that $Magic(F)$ needs to rely on a universal lossless compression scheme (that cannot exist).

For a bigger picture, consider all unique files of length $n$ bits. There are $2^n$ such files. We now encode all files using a binary-alphabet non-deterministic Turing Machine such that each non-deterministic path encodes one file (e.g., by guessing). This defines a non-deterministic Turing Machine of size $\mathcal{O}(2^n)$ with path length $n$. That is, the path length is polynomial. For example, a file of size $n$ bits could be verified against a path in this machine in linear time. Now, from Lemma~\ref{the:lemma0} it is clear that we cannot reduce this non-deterministic Turing Machine to a polynomial-size deterministic-Turing Machine and be able to reproduce the content of all files. It immediately follows that the two machine types implement different solution spaces that are not universally reducible.

\section{Applying the Conservation Principle Directly}
\label{sec:conservation}
With the concept of the solution space, the question if $P=NP$ can also be solved more directly. Let us ignore the result described in Section~\ref{sec:satisfiability} for now. 

We now define the following problem.

\begin{definition}[Binary Codebreaker]
\label{def:bcb}
Given a set of electric switches $S$ of elements from $\{0,1\}$ with cardinality $|S|=n$. The switches are configured secretly into a code lock such that only one of the $2^n$ possible sequences $(s_1,...,s_n)$ serves as a code to unlock a door. The manufacturing company built and sold exactly all $2^n$ locks of size $n$, each with unique code. 
\end{definition}

The lock has a non-deterministic programmatic interface (reading multiple switch configurations at the same time). Also the code verification mechanism inside the lock is a non-deterministic Turing Machine. The lock and the codebreaker machine can therefore be treated as one machine. 

The obvious question we want to answer is: what is the computational complexity of opening a lock with an unknown code? The language that breaks the code to open the door is therefore $BCB := \{code \in \{0,1\}^{n} |$code opens the lock$\}$. 

\begin{lemma}
\label{the:codebreaker}
$BCB \in NP$ 
\end{lemma}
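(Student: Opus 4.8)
The plan is to verify $BCB$ against the definition of $NP$ stated in Section~\ref{sec:conscc}: a language is in $NP$ if some non-deterministic Turing Machine decides it with each computation branch bounded by a polynomial number of steps. First I would observe that, by the setup immediately following Definition~\ref{def:bcb}, the lock's internal code-verification mechanism together with its non-deterministic programmatic interface may be treated as a single non-deterministic Turing Machine $M$; there is nothing external to simulate.

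Next I would exhibit the decision procedure explicitly. On input $x \in \{0,1\}^{n}$ — a candidate code — the machine $M$ applies the $n$ switch values to the lock and lets the verification mechanism run, accepting iff the door opens, i.e. iff $x$ equals the secretly configured sequence. Testing an $n$-bit candidate against the stored configuration costs only $\mathcal{O}(n)$ steps on any branch, so $M$ halts within a polynomial number of steps in $|x| = n$, and by construction $M$ accepts exactly the strings in $BCB$; hence $BCB \in NP$. I would also record the equivalent certificate-style phrasing — the trivial witness is the input itself, checked in linear time by re-running the lock — which makes clear that the non-deterministic interface (reading several switch configurations at once) is a modelling convenience rather than a logical necessity.

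The step needing the most care is not computational but conceptual: making precise that ``running the lock'' is a genuine polynomial-time subcomputation of $M$ (not an oracle query), and fixing the complexity parameter as the input length $n$ — the length of a code — rather than the number $2^{n}$ of manufactured locks. Once this is settled the lemma is immediate; the genuinely delicate companion statement, that no deterministic polynomial-time procedure can \emph{produce} an opening code, is where the conservation principle (Principle~\ref{pri:conservation}) will have to do the real work, and I would leave that to the subsequent development rather than entangle it with the present claim.
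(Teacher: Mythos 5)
Your argument is essentially the paper's: both establish that a candidate code can be checked against the lock in time linear in $n$, and then invoke a ``guess and check'' non-deterministic machine whose individual branches are polynomially bounded, which is exactly the certificate-style characterization of $NP$. If anything, your phrasing is the more careful one, since you bound each branch by $\mathcal{O}(n)$ rather than describing the total of $2^n$ combinations as being traversed ``in polynomial time'' as the paper does.
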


\begin{proof}
It is easy to see that $BCB$ can be verified in polynomial time as even a deterministic Turing Machine can open the door in time linear to $n$ using a binary comparison of the secret integer $i, 0 \leq i \leq 2{^n}-1$. 

A ``guess and check'' non-deterministic Turing Machine can therefore run through all $2^n$ combinations in polynomial time of the number of switches. This is, the computational steps are bounded by $\mathcal{O}(2^n)$. 
\end{proof}

\begin{lemma}
\label{the:codebreakerss}
The size of the solution space $S$ of $BCB$ is $|S|=\mathcal{O}(n+2^n)$ bits, with $n$ being the number of switches. 
\end{lemma}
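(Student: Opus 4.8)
The plan is to mirror the structure of the proof of Lemma~\ref{the:lemma3} for $SAT$, since $BCB$ is deliberately constructed to be an isomorphic situation: the ``secret'' in a $BCB$ instance is exactly a choice of one sequence out of $2^n$, and the company manufactured all $2^n$ locks, so across the whole problem family no lock's code can be predicted from any other lock's code. First I would argue that the collection of all $2^n$ locks of size $n$ is in bijection with the set of integers $i$ with $0 \le i \le 2^n - 1$, i.e.\ with the set $\Sigma^n$ of binary strings of length $n$. By Lemma~\ref{the:lemma0} (no universal reduction of description length), this set cannot be losslessly encoded in fewer than $\log_2(2^n) = n$ bits, and each of those $n$ bits is an independent decision in the sense of Definition~\ref{def:inddec} --- flipping any one switch position in the secret code changes a distinct lock to a distinct lock, so no bit of the code depends on the others. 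By Lemma~\ref{the:theorem2}, each such independent binary decision carries $H(b) = 1$ bit, so determining which code opens a given lock requires at least $n$ irreducible binary decisions that the Turing Machine must consider (read or compute), per Definition~\ref{def:ss}.

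Next I would account for the remaining term. The verification machinery itself --- the internal non-deterministic Turing Machine that checks a candidate configuration against the secret --- must be representable, and since a $BCB$ lock's check function is (up to relabeling) a propositional logic predicate on $n$ variables that is true on exactly one configuration, Lemma~\ref{the:lemma2} gives that its worst-case minimum representation is $2^n$ bits. Equivalently, and more in the spirit of the $SAT$ argument: the ``guess and check'' NTM of Lemma~\ref{the:codebreaker} has $2^n$ non-deterministic paths, one per candidate code, and this branching structure of size $\mathcal{O}(2^n)$ is precisely the object that, by Lemma~\ref{the:lemma0}, cannot be universally compressed below $2^n$ bits. Adding the two contributions gives a solution space of size $\mathcal{O}(n + 2^n)$ bits, and since $n + 2^n = \mathcal{O}(2^n)$ this also matches the $\mathcal{O}(2^n)$ step count observed in Lemma~\ref{the:codebreaker}, confirming consistency.

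I expect the main obstacle to be making the ``$2^n$ bits for the verification mechanism'' term rigorous rather than merely analogical. In the $SAT$ case the $2^n$ term came cleanly from Lemma~\ref{the:lemma2} applied to an arbitrary formula on $n$ variables; here one must first justify that the lock's internal check is genuinely a propositional logic predicate over the $n$ switch variables (the excerpt does assert the lock is a binary-alphabet NTM, so all its state transitions are propositional decisions, which I would invoke) and then that the family of all $2^n$ locks forces the worst-case $2^n$-bit representation rather than some lock-specific shortcut. The subtlety is that each individual lock's predicate (``true iff the configuration equals $i$'') is in fact compactly describable in $\mathcal{O}(n)$ bits once $i$ is known --- so the $2^n$ term must be attributed to the requirement of a \emph{universal} mechanism that works across all $2^n$ locks without a-priori assumptions, exactly as $Magic(F)$ had to cope with all formulas; I would phrase the proof to make that universality quantifier explicit and then invoke Lemma~\ref{the:lemma0}. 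The rest is routine bookkeeping of the additive bound.
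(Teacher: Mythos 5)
There is a genuine gap in your derivation of the $2^n$ term, and it is exactly at the point you yourself flag as the ``main obstacle.'' You try to obtain $2^n$ bits by applying Lemma~\ref{the:lemma2} to the lock's internal check predicate (or, equivalently, by applying Lemma~\ref{the:lemma0} to the NTM's branching structure). But Lemma~\ref{the:lemma2} yields $2^n$ bits only because the family of \emph{all} truth tables on $n$ variables has $2^{2^n}$ members; the family of $BCB$ check predicates (``true iff the configuration equals $i$'') has only $2^n$ members, one per lock, so the same pigeonhole argument gives a worst-case minimum representation of $\log_2(2^n)=n$ bits, not $2^n$. Your own observation that each lock's predicate is describable in $\mathcal{O}(n)$ bits once $i$ is known is the symptom of this; appealing to ``universality across all $2^n$ locks'' does not rescue it, because universally distinguishing among $2^n$ objects still costs only $n$ bits under Lemma~\ref{the:lemma0}. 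The same objection applies to your NTM-branching version: $2^n$ paths are indexable in $n$ bits, so no compression argument forces a $2^n$-bit description of that structure.

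The paper's proof of Lemma~\ref{the:codebreakerss} gets the $2^n$ term from a different source: the secret code is \emph{not part of the input at all}, and since all $2^n$ switch configurations are independent and equiprobable ($P(c)=2^{-n}$ for each), no candidate's outcome can be predicted from any other's; hence, per Definition~\ref{def:ss}, all $2^n$ yes/no decisions (``does this configuration open the lock?'') must be considered, contributing $2^n$ bits of computed decisions rather than $2^n$ bits of description. The additive $n$ term in the paper is the cost of verifying a single candidate, whereas your $n$ term is the description length of the code --- these coincide numerically but are conceptually distinct bookkeeping. In short, $BCB$ is not structurally isomorphic to $SAT$ in the way your plan assumes: in $SAT$ the exponential term lives in the worst-case \emph{input description} (Lemma~\ref{the:lemma3} via Lemma~\ref{the:lemma2}), while in $BCB$ it lives entirely on the \emph{computation} side because the relevant information is hidden inside the lock. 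Your proof as written would need to replace the Lemma~\ref{the:lemma2} appeal with the equiprobability/independence argument over the $2^n$ candidate configurations.
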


\begin{proof}
There are no assumptions to be made about the configurations of switches. This is, all switch configurations are independent and so is any decision over them. In probabilistic terms, every configuration of switches $c \in (s_0,...s_n)$ has the same probability $P(c)=\frac{1}{2^n}$ to be successful. This is, the information content of a single binary decision is $H(b)=\frac{1}{2^n}$ with $b=F(c)$ for an unknown $F$. The solution space $S$ of $BCB$ is therefore the sum of all combinations ($2^n$ bits) and the time it takes to verify a solution. That is $\mathcal{O}(n)$ bits. So $|S|=\mathcal{O}(n+2^n)$ bits.  
\end{proof}

\begin{corollary}
\label{cor:bcbnotinp}
$BCB \not \in P$.
\end{corollary}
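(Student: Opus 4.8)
The plan is to mirror the argument of Theorem~\ref{the:theorem11}, but to exploit the fact that the input length of $BCB$ is even more tightly constrained than that of $SAT$. By Definition~\ref{def:bcb}, the object presented to a deciding machine is the configuration of the $n$ switches, so the input has length exactly $n$ bits, which is trivially polynomial. I would then argue by contradiction: suppose there is a deterministic Turing Machine $M$ deciding $BCB$ in time bounded by a polynomial $p(n)$. A run of $M$ on an input of length $n$ performs at most $\mathcal{O}(p(n))$ computation steps, hence at most $\mathcal{O}(p(n))$ binary decisions, and reads at most $n$ bits of input. The total number of decisions available to $M$ — whether read or computed — is therefore bounded by $\mathcal{O}(n + p(n)) = \mathcal{O}(p(n))$.

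Next I would invoke Lemma~\ref{the:codebreakerss}: the solution space of $BCB$ has size $\mathcal{O}(n + 2^n)$ bits, and the $2^n$ bits enumerating the switch configurations are independent, since no a-priori assumption may be made about which configuration is the secret code. By Lemma~\ref{the:theorem2}, each such independent decision carries one irreducible bit. By Principle~\ref{pri:conservation}, these $2^n$ decisions can neither be predicted nor discarded; they can only be transferred between input and algorithm. Since the input contributes only $n$ bits, the machine $M$ must itself account for $2^n - \mathcal{O}(n)$ of them, i.e. it must make an exponential number of decisions. This contradicts the polynomial bound on $M$, so no such $M$ exists and $BCB \not\in P$.

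I would close by observing, as in the discussion following Theorem~\ref{the:theorem11}, that the only way to evade this conclusion would be to compress the exponential solution space into a polynomial number of decisions uniformly over all $2^n$ locks; by Lemma~\ref{the:lemma0} no such universal lossless reduction exists, so the obstruction is genuine and not an artifact of a weak machine model. This is the point of the construction: unlike $SAT$, where one might hope that some clever polynomial encoding of the formula hides the exponential content, here the input is \emph{by definition} only $n$ bits long, so the transfer argument of Principle~\ref{pri:conservation} leaves the algorithm nowhere to hide.

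The step I expect to be the crux — and the one most open to objection — is the passage from ``the solution space contains $\mathcal{O}(2^n)$ independent bits'' to ``a correct deciding machine must actually realize $\mathcal{O}(2^n)$ of those bits as reads or computation steps.'' This relies on reading Principle~\ref{pri:conservation} as forbidding any shortcut that exploits structure the problem does not guarantee, together with the fact, recorded in Lemma~\ref{the:codebreakerss}, that $BCB$ grants no such structure: every configuration is equiprobable as the secret code. Making that implication airtight, rather than merely plausible by analogy with the $SAT$ argument, is where the real work lies.
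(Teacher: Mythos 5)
Your proposal takes essentially the same route as the paper: the paper's own proof is a two-sentence appeal to the polynomial input length, Lemma~\ref{the:codebreakerss}, and Principle~\ref{pri:conservation}, which is exactly the skeleton you elaborate. The ``crux'' you flag --- passing from an exponential solution space to an exponential number of realized decisions --- is precisely the step the paper also delegates entirely to the Principle without further argument, so your reconstruction is faithful.
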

\begin{proof}
Since the input is of polynomial size and the solution space is exponential, by the principle of conservation of computational complexity, $BCB$ cannot run on a deterministic Turing Machine in polynomial time. This is $BCB \not \in P$.
\end{proof}

$BCB \in NP$ but not $\in P$. As additional note, $BCB$ seems not polynomially reducible to $SAT$ as the formula implied by $BCB$ is by definition not only unknown but also satisfiable. This implies that $BCB$ is most likely not NP-complete but this is irrelevant.

\begin{corollary}
\label{the:theorem3}
P $\not =$ NP
\end{corollary}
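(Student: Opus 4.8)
The plan is to obtain the separation by exhibiting a single language that sits in $NP$ but provably not in $P$. First I would recall Lemma~\ref{the:codebreaker}: $BCB \in NP$, because a non-deterministic Turing Machine can explore all $2^n$ switch configurations in parallel while a deterministic verifier confirms a candidate code in time linear in $n$. Second, I would invoke Corollary~\ref{cor:bcbnotinp}, which states $BCB \not\in P$: a $BCB$ instance has input length polynomial in the number of switches $n$, yet by Lemma~\ref{the:codebreakerss} its solution space has cardinality $\mathcal{O}(n+2^n)$ bits, and Principle~\ref{pri:conservation} forbids a deterministic Turing Machine fed a polynomial-length input from realizing an exponential-size solution space within a polynomial number of computation steps.

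With both facts in hand the conclusion is a one-line contradiction argument: if $P=NP$ held, then $BCB\in NP$ would imply $BCB\in P$, contradicting Corollary~\ref{cor:bcbnotinp}; therefore $P\neq NP$. I would also point out the alternative route advertised in the abstract, which relies on Section~\ref{sec:satisfiability} rather than on the direct construction of Section~\ref{sec:conservation}: Theorem~\ref{the:theorem11} gives $SAT\not\in P$, and $SAT\in NP$ is classical (a satisfying assignment of the $n$ variables is a certificate checkable in linear time), so the identical contradiction yields the implication $SAT\not\in P \Rightarrow P\neq NP$. Either witness suffices, and presenting both makes the statement robust to objections one might raise against any single construction.

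The main obstacle is not this final assembly, which is purely formal, but ensuring the feeder results are unassailable — in particular that the exponential solution-space bound of Lemma~\ref{the:codebreakerss} cannot be circumvented by a clever deterministic polynomial-time procedure. I would therefore stress that $BCB$ admits no a-priori structural assumption about which of the $2^n$ codes a given lock carries (the manufacturer produced every lock exactly once, one per code), so that no deterministic strategy can, in the worst case, beat inspection of exponentially many configurations; this is exactly the ``no universal lossless reduction'' content of Lemma~\ref{the:lemma0}, transported from description length to the solution space, and it is what legitimizes the appeal to the conservation principle. Once that point is made firmly, the corollary follows immediately.
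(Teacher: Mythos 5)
Your proposal is correct relative to the paper's feeder results and takes essentially the same route as the paper's own proof: exhibit a language (here $BCB$, with $SAT$ as the alternative witness) that lies in $NP$ by Lemma~\ref{the:codebreaker} but not in $P$ by Corollary~\ref{cor:bcbnotinp} (or Theorem~\ref{the:theorem11}), whence $P \neq NP$. The additional commentary on defending Lemma~\ref{the:codebreakerss} is sensible but concerns the feeder results rather than this corollary, whose derivation is the same one-line argument the paper gives.
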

\begin{proof}
To show that $P \not = NP$, it suffices to show that $\exists L\in NP$ with $L \not\in P$ with $L$ being a language~\cite{cook2000}. Therefore it follows by corollary from Corollary~\ref{cor:bcbnotinp} or Theorem~\ref{the:theorem11} that $P \not = NP$.
\end{proof}

\section{On the Halting Problem}
\label{sec:halting}
We now introduce an alternative explanation for the undecidability of the halting problem~\cite{turing1936} based on the principle of conservation of computational complexity. 

A solution to a decision problem requires a certain number of irreducible decisions. Based on Lemma~\ref{the:lemma0} and Lemma~\ref{the:theorem2} it is universally impossible for a Turing Machine to skip decisions, thus reducing the solution space. If analyzing the syntax of a program of another Turing Machine requires less decisions than running the program, it would imply, again, a violation of the conservation of the solution space. We can observe the following. 

Let the language that a Turing Machine accepts be $L=\{w ~ | ~ TM ~\text{accepts}~ w\}$ with $w \in \Sigma^*$. For simplicity, we call the tape {\em memory} and allow random access by direct addressing. Furthermore, we will use a Turing-Machine-equivalent computation model, namely the $WHILE$ program~\cite{schoening} (p. 106). This model is based on Kleene's Normal Form Theorem~\cite{kleene1943}, which states that any Turing-complete program can be expressed by only one $WHILE$ loop with a Boolean condition.  The condition cannot be omitted because otherwise the program is not Turing complete ($\mu$-recursive), only $LOOP$-calculatable (primitive recursive). This has been shown by~\cite{schoening} (p.121) based on~\cite{ackermann1928}.

This means, any program can be expressed as 
\begin{verbatim}
WHILE F { P }
\end{verbatim} 
where $F$ is a function returning $0$ or $1$ and $P$ is an arbitrary sub program. If we choose the alphabet of the Turing Machine to be binary, the halting condition $F$ must be a propositional logic formula. $F$ checks a set of memory cells, modified by $P$ for the acceptance of $w$. 

More specifically, we can express the program as 
\begin{verbatim}
WHILE NOT F(cell[k],...,[m]) { P }
\end{verbatim} 
where $cell_k,...,cell_m$ are memory cells (Boolean variables) and $P$ is only responsible for modifying their values. Without losing generality, we inverted the $WHILE$ loop. $P$ ultimately only leads to the configuration of variables $cell_k,...,cell_m$. 

Deciding if $F$ is satisfiable, that is determining if there exist a configuration for $cell_k,...,cell_m$ such that $F$ is true based on $F$'s syntax (this is, without running the program), equates to solving if the program halts on some input $w$: $SAT(``F(x_k,...,x_m)")$ alone would be predicting if the $WHILE$ loop can ever halt, ignoring the values of the variables. This is undecidable and known as the Existential Halting Problem (EHP). EHP is defined as follows: given a Turing Machine $M$, is there some input $w$ on which $M$ halts, formally: $EHP:=\{< M,w > ~ | \exists~ $w$~\text{such that}~M~\text{stops on}~ w\}$. 

This reduction of $EHP$ to $SAT$ is a another way of showing that $SAT$ is only solvable by simulating all states of the independent variables. In the case of $F$ encased in a $WHILE$ program, the independence needs to be resolved fully. This is, if some variables in $F$ (memory cells) depend on other variables, they must be evaluated until each variable in $F$ is independently valid. For example, until the only dependency left is a decision by the user. For many reasons, this can take infinite time. This means, the cardinality (size) of the solution space might be infinite.

The halting problem is therefore undecidable because predictions of irreducible decisions are impossible.

\section{Conclusions}
To the best of our knowledge~\cite{pnp,clayinst}, the proofs contained in our article have not been proposed before. 
We presented the principle of conservation of computational complexity. This principle is derived by the simple fact that one bit represents two equiprobable states based on one independent binary decision. Under this viewpoint, it becomes obvious that skipping irreducible decisions is analogous to applying a lossy compression scheme that cannot guarantee an exact solution. Binarization of the input and computing space might seem strange at first. In the end, it is only a modern form of Goedelization~\cite{goedel}. We would like to note that the field of Human Computer Interaction has long sensed the existence of a law of conservation of complexity~\cite{saffer2010} that is in full agreement with our findings. Similarly, it is known in the machine learning and computer vision communities that the runtime of a signal processing algorithm for the same length input depends on the noisiness of the signal~\cite{friedlandmultimedia}. Last but not least, the consequences of the principle of the conservation of computational complexity inevitably leads to the determination that $P \neq NP$. 

In our minds, understanding complexity is an effort of reducing it. The original works of~\cite{cook1971} and~\cite{karp1972} demonstrate ingenious examples of complexity reduction. However, the apparent disadvantage of current complexity theory is that it neglects that an algorithm by itself is meaningless. An algorithm is only part of a more complex system. It is somewhat surprising that even a fantastic book like~\cite{schoening} does not even mention once the notion of a {\em bit}. The bit connects computation to information theory and physics~\cite{landauer1961,dzugutov1998}. After all, computers are part of the physical universe and bound to its principles, such as the conservation of energy. As mentioned previously, investigating how Shannon's Source {\it Coding} Theorem~\cite{shannon1948} can be used as a universal measure for computational complexity therefore seems like a viable path for future research~\cite{Vulpiani2010,FriedlandMetere2017}.

\section*{Acknowledgements}
This work was performed under the auspices of the U.S. Department of Energy by Lawrence Livermore National Laboratory under Contract DE-AC52-07NA27344. It was also partially supported by a Lawrence Livermore Laboratory Directed Research \& Development grants (17-ERD-096, 17-SI-004, and 18-ERD-021). IM number LLNL-JRNL-743757. Any findings and conclusions are those of the authors, and do not necessarily reflect the views of the funders. We want to thank our families for their support enduring weekend and late-night shifts writing this article. We also want to thank Dr. Mario Krell, Dr. Tomas Oppelstrup, Dr. Markus Schordan, Dr. Jason Lenderman, Dr. Adam Janin and Dr. Jeffrey Hittinger for encouraging remarks on this article. We want to thank Prof. Satish Rao and Prof. Richard Karp for taking the time to discuss with us. We thank our former PhD advisors, Prof. Ra\'ul Rojas and Prof. Mikhail Dzugutov as they continue to be incredible mentors. Prof. Jerome Feldman deserves thanks and credit for instrumental fundamental advise on this and other articles. We would like to point out that this article would not have been possible without the existence of Wikipedia.org. 

\bibliographystyle{alpha}
\bibliography{complexity}
\end{document}